\newtheorem{theorem}{Theorem}
\newtheorem{lemma}{Lemma}
\newtheorem{definition}{Definition}
\begin{document}

\title{An $O(N)$ Algorithm for Solving the Smallest Enclosing Sphere Problem in the Presence of Degeneracies}
\author{Netzer Moriya}
\date{}
\maketitle

\section*{Abstract}

Efficient algorithms for solving the Smallest Enclosing Sphere (SES) problem, such as Welzl's algorithm, often fail to handle degenerate subsets of points in 3D space. Degeneracies and ill-posed configurations present significant challenges, leading to failures in convergence, inaccuracies or increased computational cost in such cases. Existing improvements to these algorithms, while addressing some of these issues, are either computationally expensive or only partially effective. In this paper, we propose a hybrid algorithm designed to mitigate degeneracy while maintaining an overall computational complexity of $O(N)$. By combining robust preprocessing steps with efficient core computations, our approach avoids the pitfalls of degeneracy without sacrificing scalability. The proposed method is validated through theoretical analysis and experimental results, demonstrating its efficacy in addressing degenerate configurations and achieving high efficiency in practice.

\section{Introduction}

The \textbf{Smallest Enclosing Sphere (SES)} problem is a classical computational geometry problem that seeks to determine the 
smallest sphere capable of enclosing a given set of points in three-dimensional space. This problem finds applications in diverse 
fields, including computer graphics, collision detection, machine learning, and optimization. The efficient computation of the SES 
has therefore been an area of active research, with several well-established algorithms developed over the years~\cite{Vrahatis2024, Moriya2023, Edelsbrunner2018}. 
Among these, Welzl’s randomized incremental algorithm is widely regarded for its elegant simplicity and optimal $O(N)$ 
complexity in the expected case.

Despite its theoretical efficiency, solving the SES problem in practice is often complicated by the presence of degeneracies 
and ill-posed configurations. Degeneracies occur when subsets of points exhibit special geometric arrangements, such as co-spherical 
or co-planar configurations, which can lead to numerical instability, inaccuracies, or even failure to converge~\cite{Moriya2024}. 
Addressing these challenges is critical for ensuring the robustness and scalability of SES computation methods, particularly in applications requiring high precision or involving large datasets.

Several approaches have been proposed to mitigate the challenges posed by degeneracies in SES computations. These methods include enhancements to Welzl’s algorithm, specialized preprocessing techniques, and hybrid strategies that combine different computational paradigms. While such improvements have proven effective in specific cases, they are often associated with increased computational complexity, such as $O(N \log N)$ for the $P_{ch}$ method. This trade-off between robustness and efficiency motivates the need for a novel approach capable of addressing degeneracies while maintaining $O(N)$ complexity.

In this paper, we introduce a hybrid algorithm that achieves this balance by leveraging a novel geometric projection framework. The proposed method begins by projecting the input point cloud onto $K$ two-dimensional planes, evenly oriented in space. Each projection is analyzed to identify a set of four extreme points based on a local coordinate system attached to the corresponding plane. The collection of these extreme points across all projections forms a subset, $P_s$, of the original point cloud. Through a series of theoretical proofs, we establish the following key results:

\begin{enumerate}
    \item Solving the SES problem for the full point set $P$ is equivalent to solving it for a reduced subset $P_{ch}$, which comprises the convex hull points of $P$.
    \item The reduced problem involving $P_{ch}$ inherently avoids degeneracies arising from internal points of $P$.
    \item The subset $P_s$, derived from the projection process, is guaranteed to be a subset of $P_{ch}$.
    \item As the number of projections $K$ approaches infinity, the subset $P_s$ converges to $P_{ch}$, ensuring robustness against degeneracies for sufficiently large $K$.
    \item With $P_s$ as the input to the SES problem, the proposed method achieves a computational complexity of $O(KN)$, which simplifies to $O(N)$ when $K \ll N$.
\end{enumerate}

In the sections that follow, we detail the geometric projection technique, theoretical analysis, and implementation of the proposed 
algorithm. We also provide a comprehensive comparison with existing methods to highlight the advantages of our approach.

\section{Background and Related Work}

Given a set of points \(P\) in \(\mathbb{R}^3\) that occupy a finite and confined volume, we investigate cases where subsets of 
points are ill-posed, such as being collinear or coplanar. We assume the use of Welzl's algorithm as the basis for computing 
the Smallest Enclosing Sphere (SES). This section provides a review of modifications and extensions to Welzl's 
algorithm to handle degeneracies, along with an analysis of the computational cost of these enhancements.

\subsection{Welzl's Algorithm Overview}
Welzl's algorithm~\cite{Welzl1991} is a randomized incremental algorithm for computing the SES of a set of points \(P\). Its key properties include:
\begin{itemize}
    \item \textbf{Expected complexity:} \(O(N)\), where \(N\) is the number of points.
    \item \textbf{Worst-case complexity:} \(O(N^3)\), which occurs rarely.
    \item The SES in 3D is determined by at most 4 points on the boundary.
\end{itemize}
However, the algorithm assumes general position and does not explicitly handle degeneracies such as collinear or coplanar subsets.

\subsection{Degenerate Configurations and Challenges}
\textbf{Degenerate cases} arise when:
\begin{itemize}
    \item \textbf{Collinear points:} All points lie on a single line.
    \item \textbf{Coplanar points:} All points lie in a single plane.
\end{itemize}
In these cases, the SES computation needs to:
\begin{itemize}
    \item Detect and handle the lower-dimensional configuration.
    \item Reduce the problem dimension appropriately (to 1D or 2D).
    \item Ensure numerical stability during computation.
\end{itemize}

\subsection{Welzl's Algorithm: Modifications and Robustness Improvements}
Several modifications to Welzl's algorithm have been proposed to address degenerate cases:

\subsubsection{1. Lower-Dimensional Handling}
\textbf{Approach:} Extend Welzl's algorithm to handle degeneracies by identifying collinear or coplanar subsets during execution and 
solving the SES in the reduced dimension~\cite{Flemming2024}.
\begin{itemize}
    \item For collinear points: Once a collinear subset is identified, the problem reduces to finding the smallest interval (1D segment) enclosing the points. This is achieved by determining the two extreme points along the line, which can be done in $O(N)$.
    \item For coplanar points: Once a coplanar subset is identified, the problem reduces to computing the smallest enclosing circle (2D problem). Welzl's algorithm in 2D solves this with an expected complexity of $O(N)$.
\end{itemize}
\textbf{Added complexity:} The detection of collinear or coplanar subsets requires testing linear dependencies (e.g., using cross products or determinants) between points. Without prior knowledge of degeneracies, this detection process involves pairwise or triplet-wise comparisons, resulting in a worst-case complexity of $O(N^2)$. Consequently, the overall complexity of the algorithm increases to $O(N^2)$ when degeneracy detection dominates. Optimization strategies, such as spatial partitioning or sorting, may reduce practical runtime but do not eliminate this fundamental $O(N^2)$ nature.

\subsubsection{2. Symbolic Perturbation Techniques}
\textbf{Approach:} Introduce symbolic perturbations to slightly "move" points, ensuring they are in general position without changing 
the SES~\cite{Minakawa1997}.
\begin{itemize}
    \item Perturbations are symbolic and have no measurable impact on numerical results.
    \item The algorithm then proceeds as if the points are in general position.
\end{itemize}
\textbf{Added complexity:} Minimal, as the perturbation step is \(O(N)\) and does not significantly alter the original complexity. \\

\textbf{Weaknesses:}
\begin{itemize}
    \item Symbolic perturbations may introduce new degenerate subsets:
        \begin{itemize}
            \item By slightly moving points, configurations that were not previously degenerate could become so, particularly in cases involving near-degenerate point sets.
        \end{itemize}
    \item Symbolic perturbations can alter the SES:
        \begin{itemize}
            \item External points may shift just enough to affect the computation of the SES center and radius, especially when the original point set has points near the sphere's boundary.
            \item While the impact is symbolic, the final SES may not fully align with the unperturbed problem's geometric intent.
        \end{itemize}
\end{itemize}

Symbolic perturbations aim to simplify the handling of degeneracies but may inadvertently complicate the problem by creating new degenerate configurations or modifying the SES's characteristics. These limitations should be carefully considered when applying this technique, particularly in high-precision or sensitive applications.

\subsubsection{3. Convex Hull Preprocessing}
\textbf{Approach:} Compute the convex hull of \(P\) as a preprocessing step to filter out interior points. The SES is then computed 
using the convex hull vertices \(P_{ch}\)~\cite{Skala2024}.
\begin{itemize}
    \item Convex hull computation has an expected complexity of \(O(N \log N)\).
    \item The SES is then computed on \(P_{ch}\), where \(|P_{ch}| \leq N\).
\end{itemize}
\textbf{Added complexity:} The convex hull computation introduces an \(O(N \log N)\) preprocessing step but simplifies the SES computation.

\subsubsection{4. Exact Arithmetic and Robust Geometric Predicates}
\textbf{Approach:} Use exact arithmetic to avoid numerical issues in degeneracy detection and handling~\cite{Sugihara1992}.
\begin{itemize}
    \item Ensures correct handling of collinear and coplanar points without relying on perturbations.
    \item Requires robust geometric predicates to classify points.
\end{itemize}
\textbf{Added complexity:} Higher computational cost due to the use of exact arithmetic, which increases the time per operation. While the geometric algorithm remains \(O(N)\) in the absence of degeneracies, the process of detecting and classifying degenerate subsets can involve \(O(N^2)\) comparisons in the worst case. Additionally, the computational overhead of exact arithmetic operations may further increase runtime by a constant or logarithmic factor, depending on the input size and number representation.

\subsection{Analysis of Computational Cost}
\begin{table}[h!]
\centering
\begin{tabular}{|p{4cm}|p{4cm}|p{4cm}|}
\hline
\textbf{Method} & \textbf{Added Complexity} & \textbf{Total Complexity} \\
\hline
Lower-Dimensional Handling & \(O(N^2)\) for degeneracy detection & \(O(N^2)\) in worst case \\
\hline
Symbolic Perturbation & \(O(N)\) for perturbation & \(O(N)\) expected, with potential inaccuracies \\
\hline
Convex Hull Preprocessing & \(O(N \log N)\) preprocessing & \(O(N \log N)\) total \\
\hline
Exact Arithmetic & Constant or logarithmic factor increase & \(O(N^2)\) in degeneracy cases \\
\hline
\end{tabular}
\caption{Computational cost of handling degeneracies in SES computation.}
\end{table}

Welzl's algorithm, with modifications, can effectively handle degenerate cases such as collinear or coplanar points. 
While symbolic perturbation and lower-dimensional handling maintain the algorithm's expected \(O(N)\) complexity with potential
inaccuracies, preprocessing steps like convex hull computation increase complexity to \(O(N \log N)\). 
For applications requiring robustness, exact arithmetic provides a reliable solution at the cost of increased computational effort.

\section{Proposed Method}

Given a finite set of points \(P = \{p_1, p_2, \dots, p_N\} \subset \mathbb{R}^3\), we seek to construct a 
subset \(P_s \subseteq P_{ch} \subseteq P\) such that \(P_s\) contains all vertices of the convex hull \(P_{ch}\) of \(P\). 
The subset \(P_s\) is constructed by projecting \(P\) onto a finite number of planes with varying orientations. 
The problem can be formally stated as follows: \\

\textbf{Inputs:}
\begin{itemize}
    \item \(P\): A set of \(N\) points in \(\mathbb{R}^3\).
    \item \(K\): The number of projection planes with distinct orientations \(\{\Pi_1, \Pi_2, \dots, \Pi_K\}\).
    \item \(f(P, \Pi_k)\): A function that maps the projection of \(P\) onto plane \(\Pi_k\) and identifies the extreme points in this 
	projection.
\end{itemize}

\textbf{Outputs:}
\begin{itemize}
    \item \(P_s\): A subset of \(P\) such that \(P_s \subseteq P_{ch}\), with the goal of ensuring \(P_s = P_{ch}\).
\end{itemize}

\textbf{Assumptions:}
\begin{itemize}
    \item The set \(P\) is regular, meaning the points in \(P\) are reasonably distributed without pathological clustering or extreme sparsity.
    \item The projection planes \(\{\Pi_1, \Pi_2, \dots, \Pi_K\}\) are symmetrically distributed around \(P\) in 3D space, ensuring directional coverage.
    \item The extreme points of each projection correspond to vertices of the convex hull \(P_{ch}\) in the original space.
\end{itemize}

\subsection{Preliminary Definitions and Lemmas}

\begin{definition}[Convex Hull]
The convex hull of a set of points \(P \subset \mathbb{R}^3\), denoted by \(P_{ch}\), is the smallest convex set that contains all points in \(P\). Formally, 
\[
P_{ch} = \left\{ \sum_{i=1}^n \lambda_i p_i \mid p_i \in P, \lambda_i \geq 0, \sum_{i=1}^n \lambda_i = 1 \right\}.
\]
\end{definition}

\begin{definition}[Extreme Points]
A point \(p \in P\) is called an extreme point of a set \(S \subset \mathbb{R}^3\) if it cannot be expressed as a convex combination of other points in \(S\). All points of \(P_{ch}\) that lie on its boundary are extreme points of \(P_{ch}\).
\end{definition}

\begin{lemma}[Projection and Convexity]
Let \(P \subset \mathbb{R}^3\), and let \(P_{ch}\) be its convex hull. For any linear projection \(\pi: \mathbb{R}^3 \to \mathbb{R}^2\), the projection \(\pi(P_{ch})\) is the convex hull of \(\pi(P)\).
\end{lemma}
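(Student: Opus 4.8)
The plan is to prove the two set inclusions $\pi(P_{ch}) \subseteq \mathrm{conv}(\pi(P))$ and $\mathrm{conv}(\pi(P)) \subseteq \pi(P_{ch})$ separately, in each case exploiting the single structural fact that a linear (indeed affine) map commutes with the formation of convex combinations. Concretely, I would first record the observation that for any points $q_1,\dots,q_m \in \mathbb{R}^3$ and any weights $\lambda_i \ge 0$ with $\sum_i \lambda_i = 1$,
\[
\pi\!\left(\sum_{i=1}^m \lambda_i q_i\right) = \sum_{i=1}^m \lambda_i\, \pi(q_i),
\]
which is immediate from linearity of $\pi$ (the unit-sum condition makes it valid even for an affine part, should one be present). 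This identity is the engine behind both inclusions.

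For the forward inclusion, take an arbitrary $y \in \pi(P_{ch})$, so $y = \pi(x)$ with $x \in P_{ch}$. By the definition of the convex hull, $x = \sum_{i=1}^n \lambda_i p_i$ for some $p_i \in P$, $\lambda_i \ge 0$, $\sum_i \lambda_i = 1$; applying the identity above gives $y = \sum_i \lambda_i \pi(p_i)$, a convex combination of points of $\pi(P)$, hence $y \in \mathrm{conv}(\pi(P))$. For the reverse inclusion, take $y \in \mathrm{conv}(\pi(P))$, written as $y = \sum_{j=1}^n \mu_j\, \pi(p_j)$ with $p_j \in P$, $\mu_j \ge 0$, $\sum_j \mu_j = 1$; set $x := \sum_j \mu_j p_j$, which lies in $P_{ch}$ by the definition of the hull, and note $\pi(x) = y$ again by the identity, so $y \in \pi(P_{ch})$. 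Combining the two inclusions yields $\pi(P_{ch}) = \mathrm{conv}(\pi(P))$.

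Since $P$ is finite, all convex combinations involved are finite sums and no appeal to Carathéodory's theorem or to limiting arguments is needed; likewise there are no closure or topological subtleties, because $P_{ch}$ is a compact polytope and its continuous image $\pi(P_{ch})$ is automatically compact and convex. I do not anticipate a genuine obstacle: the only points deserving care are to treat the weight vectors as genuine probability vectors so that the computation remains valid for affine (not merely linear) projections, and to keep the two directions logically separate rather than conflating ``image of the hull'' with ``hull of the image'' from the outset. As a remark I would note the abstract alternative, namely the general identity $f(\mathrm{conv}(S)) = \mathrm{conv}(f(S))$ valid for any affine map $f$ and any set $S$, specialized to $f = \pi$ and $S = P$; but I would still present the elementary two-inclusion argument in full for self-containedness.
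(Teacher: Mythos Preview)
Your proposal is correct and follows essentially the same approach as the paper: both argue the two inclusions by exploiting that a linear projection commutes with convex combinations, writing $\pi(\sum_i \lambda_i p_i) = \sum_i \lambda_i \pi(p_i)$. If anything, your reverse inclusion is spelled out more carefully than the paper's, which dispatches it in a single sentence (``every point in the convex hull of $\pi(P)$ corresponds to a point in $P_{ch}$ under the projection''); your explicit construction of the preimage $x = \sum_j \mu_j p_j$ makes that step transparent.
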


\begin{proof}
Linear projections preserve convex combinations. Thus, if \(q \in P_{ch}\), then \(q\) can be expressed as a convex combination of points in \(P\). Applying \(\pi\) to \(q\) yields
\[
\pi(q) = \pi\left( \sum_{i=1}^n \lambda_i p_i \right) = \sum_{i=1}^n \lambda_i \pi(p_i),
\]
where \(\pi(p_i) \in \pi(P)\). Hence, \(\pi(q)\) lies in the convex hull of \(\pi(P)\). Conversely, every point in the convex hull of \(\pi(P)\) corresponds to a point in \(P_{ch}\) under the projection. Thus, \(\pi(P_{ch}) = \text{conv}(\pi(P))\).
\end{proof}

\section{Proof of the Equivalence of SES for a Set of Points $P$ and Its Convex Hull \(P_{ch}\)}

Given a set of points \(P\) in \(\mathbb{R}^3\), let \(P_{ch}\) denote the set of vertices of the convex hull of \(P\). The smallest encapsulating sphere (SES) of \(P\) is defined as the smallest sphere in \(\mathbb{R}^3\) that completely encloses all points in \(P\). This proof demonstrates that solving the SES problem for \(P\) is equivalent to solving it for \(P_{ch}\).

\subsection{Theorem}
\begin{theorem}
Let \(P \subset \mathbb{R}^3\) be a finite set of points, and let \(P_{ch}\) be the set of vertices of the convex hull of \(P\). Then, the SES of \(P\) is identical to the SES of \(P_{ch}\).
\end{theorem}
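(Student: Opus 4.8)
The plan is to prove the two sets have the same SES by a double-containment-style argument on the spheres themselves. Let $B^\star$ denote the SES of $P$ with center $c^\star$ and radius $r^\star$, and let $B_{ch}$ denote the SES of $P_{ch}$ with center $c_{ch}$ and radius $r_{ch}$. The key observation is that $P_{ch} \subseteq P$, so any sphere enclosing $P$ automatically encloses $P_{ch}$; in particular $B^\star$ encloses $P_{ch}$, and since $B_{ch}$ is the \emph{smallest} such sphere we get $r_{ch} \le r^\star$.

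For the reverse inequality I would show that any closed ball $B$ containing $P_{ch}$ must in fact contain all of $P$. This is the geometric heart of the argument: a closed ball is a convex set, so if it contains the vertex set $P_{ch}$ it contains their convex hull $\mathrm{conv}(P_{ch})$, and by the definition of the convex hull (and the fact that every point of $P$ is a convex combination of vertices of the hull, i.e. $P \subseteq \mathrm{conv}(P_{ch}) = \mathrm{conv}(P)$) it therefore contains $P$. Applying this to $B_{ch}$ shows $B_{ch}$ encloses $P$, and minimality of $B^\star$ gives $r^\star \le r_{ch}$. Combining the two inequalities yields $r^\star = r_{ch}$.

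Finally I would upgrade the equality of radii to equality of the spheres by invoking uniqueness of the SES: for a finite (more generally, bounded) point set the smallest enclosing ball is unique, since if two distinct balls of the common minimal radius $r^\star$ both enclosed the set, a standard argument shows the set would be contained in the intersection of the two balls, which lies in a strictly smaller ball centered at the midpoint of the two centers — contradicting minimality. Hence $c^\star = c_{ch}$ and $r^\star = r_{ch}$, so the SES of $P$ and the SES of $P_{ch}$ coincide.

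The main obstacle, such as it is, is making the convexity step fully rigorous: one must justify that $\mathrm{conv}(P) = \mathrm{conv}(P_{ch})$ (equivalently, that every point of $P$ lies in the convex hull of the hull's \emph{vertices}), which is immediate from Carathéodory-type reasoning or the Krein–Milman-style fact that a compact convex polytope is the convex hull of its extreme points — and then that a closed ball, being convex, swallows that hull once it contains the vertices. The uniqueness lemma for the SES is the other ingredient that should be stated explicitly rather than taken for granted, though its proof is the short midpoint argument sketched above.
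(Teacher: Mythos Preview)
Your proposal is correct and follows essentially the same approach as the paper: both arguments hinge on (i) $P_{ch}\subseteq P$ so the SES of $P$ encloses $P_{ch}$, (ii) convexity of a closed ball together with $P\subseteq\mathrm{conv}(P_{ch})$ so the SES of $P_{ch}$ encloses $P$, and (iii) uniqueness of the SES to conclude the two spheres coincide. Your double-inequality packaging of the radii and the explicit midpoint argument for uniqueness are somewhat tidier than the paper's three-lemma decomposition, but the mathematical content is the same.
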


\subsection{Proof}
\textbf{Definitions and Notation:}
\begin{itemize}
    \item The smallest encapsulating sphere (SES) of a set \(S\) is the unique sphere \(\mathcal{S}\) with center \(c \in \mathbb{R}^3\) and radius \(r \geq 0\) such that \(\|x - c\| \leq r\) for all \(x \in S\) and \(r\) is minimized.
    \item The convex hull \(\text{conv}(P)\) of a set \(P\) is the smallest convex set containing \(P\), represented as \(\{\sum_{i=1}^n \lambda_i p_i : p_i \in P, \lambda_i \geq 0, \sum_{i=1}^n \lambda_i = 1\}\).
    \item Let \(P_{ch} \subseteq P\) denote the set of vertices of \(\text{conv}(P)\).
\end{itemize}

\textbf{Step 1: SES depends only on extremal points}
\begin{lemma}
The SES of a set \(P \subset \mathbb{R}^3\) depends only on the points in \(P\) that lie on the boundary of \(\text{conv}(P)\).
\end{lemma}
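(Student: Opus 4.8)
The plan is to prove that the SES of $P$ coincides with the SES of the subset of boundary points of $\text{conv}(P)$ by a two-way inclusion argument on the family of enclosing spheres. Write $B = P \cap \partial(\text{conv}(P))$ for the boundary points. Since $B \subseteq P$, any sphere enclosing $P$ also encloses $B$, so the minimal enclosing radius for $B$ is at most that for $P$; the substance is the reverse inequality. For that, I would take the SES $\mathcal{S}$ of $B$, with center $c$ and radius $r$, and show every point of $P$ lies inside $\mathcal{S}$.

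The key geometric fact I would invoke is that the closed ball $\bar{B}(c,r)$ is convex, and it contains $B$. I then want to argue it contains all of $\text{conv}(P)$, hence in particular all of $P$. The cleanest route: every point $p \in P$ lies in $\text{conv}(B)$ — i.e. the convex hull of $P$ equals the convex hull of its extreme points (Krein–Milman in finite dimensions, or simply Carathéodory plus the fact that a compact convex polytope is the convex hull of its vertices). Once $p = \sum \lambda_i b_i$ with $b_i \in B$, $\lambda_i \ge 0$, $\sum \lambda_i = 1$, convexity of $\bar{B}(c,r)$ gives $\|p - c\| = \|\sum \lambda_i (b_i - c)\| \le \sum \lambda_i \|b_i - c\| \le \sum \lambda_i r = r$. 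Hence $\mathcal{S}$ encloses $P$, so the SES radius of $P$ is at most $r$; combined with the trivial direction, the two radii are equal, and by uniqueness of the SES the spheres themselves coincide.

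One subtlety to nail down is the identification of "points of $P$ on the boundary of $\text{conv}(P)$" with "vertices $P_{ch}$ of $\text{conv}(P)$," since the lemma as stated refers to boundary points while the surrounding theorem refers to vertices. Strictly, a boundary point of $P$ need not be a vertex (it could lie in the relative interior of a face). I would address this by noting that it suffices that $\text{conv}(P) = \text{conv}(B) = \text{conv}(P_{ch})$, which holds because every point of $P$, and indeed every point of $\text{conv}(P)$, is a convex combination of vertices; the argument above then goes through verbatim with $B$ replaced by $P_{ch}$. I would also remark on the edge case where $P$ lies in an affine subspace of dimension $<3$ (collinear or coplanar $P$): the argument is dimension-agnostic since it only uses convexity of balls, so it still applies, though "boundary" should be read as relative boundary.

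The main obstacle is not any single hard step but being careful about exactly which subset ($P_{ch}$ vs.\ all boundary points) is used and ensuring the convex-combination representation $p \in \text{conv}(P_{ch})$ is justified rather than assumed — this is where one must either cite the finite-dimensional Krein–Milman / Minkowski theorem on polytopes or give a short inductive argument reducing any point of $P$ to a combination of extreme points. Everything after that is the one-line Jensen-type inequality using convexity of the ball.
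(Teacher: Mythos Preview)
Your argument is correct and rests on the same core observation as the paper's: any point of $P$ is a convex combination of convex-hull vertices, and the closed ball is convex, so containment of the vertices forces containment of all of $P$. The paper packages this slightly differently---it starts from the SES of $P$, picks a point $x$ touching the sphere that is not a vertex, writes $x = \sum \lambda_i v_i$, and observes that since all $v_i$ already satisfy $\|v_i - c\| \le r$, the point $x$ is redundant---whereas you run the cleaner two-way radius comparison, taking the SES of the boundary subset and showing via the Jensen-type inequality that it already encloses $P$. Your organization is more rigorous: it makes both inequalities explicit, invokes uniqueness to conclude equality of the spheres (not just radii), and flags the boundary-versus-vertex and low-dimensional subtleties that the paper leaves implicit. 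Substantively, though, the two proofs are the same idea.
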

\begin{proof}
Consider the definition of the SES \(\mathcal{S}(c, r)\). For the SES to be minimal, the sphere must touch at least one point on its boundary. Let \(x \in P\) be such a point, and assume \(x\) is not a vertex of \(\text{conv}(P)\). Then, \(x\) can be written as a convex combination \(x = \sum_{i=1}^m \lambda_i v_i\), where \(v_i \in P_{ch}\) and \(\lambda_i \geq 0\) with \(\sum \lambda_i = 1\).

Since \(\|x - c\| \leq r\) and \(\|v_i - c\| \leq r\) for all \(i\), replacing \(x\) with \(v_i\) does not increase \(r\). Hence, interior points do not affect the SES.
\end{proof}

\textbf{Step 2: SES of \(P_{ch}\) encloses all of \(P\)}
\begin{lemma}
The SES of \(P_{ch}\) is a valid encapsulating sphere for \(P\).
\end{lemma}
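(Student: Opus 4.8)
The plan is to show that the closed ball defining the SES of $P_{ch}$ already contains every point of $P$, using only the convexity of balls together with the fact that $P \subseteq \mathrm{conv}(P_{ch})$. First I would let $\mathcal{S}(c,r)$ denote the SES of $P_{ch}$, so that by definition $\|v - c\| \le r$ for every vertex $v \in P_{ch}$; equivalently, $P_{ch}$ is contained in the closed ball $B(c,r) = \{x \in \mathbb{R}^3 : \|x - c\| \le r\}$. Next I would invoke the fact that the convex hull of $P$ equals the convex hull of its vertex set, $\mathrm{conv}(P) = \mathrm{conv}(P_{ch})$, so that every $p \in P$ admits a finite representation $p = \sum_{i} \lambda_i v_i$ with $v_i \in P_{ch}$, $\lambda_i \ge 0$, and $\sum_i \lambda_i = 1$.

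The core of the argument is then a single application of the triangle inequality: for such a $p$,
\[
\|p - c\| = \left\| \sum_i \lambda_i (v_i - c) \right\| \le \sum_i \lambda_i \|v_i - c\| \le \Big(\sum_i \lambda_i\Big) r = r ,
\]
so $p \in B(c,r)$. Since $p \in P$ was arbitrary, $P \subseteq B(c,r)$, which is exactly the claim that the SES of $P_{ch}$ is a valid encapsulating sphere for $P$. An equivalent coordinate-free phrasing, which I might prefer for cleanliness, is: $B(c,r)$ is convex and contains $P_{ch}$, hence it contains $\mathrm{conv}(P_{ch}) = \mathrm{conv}(P) \supseteq P$.

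I do not anticipate a real obstacle here; the only point needing care is the justification that $\mathrm{conv}(P_{ch}) = \mathrm{conv}(P)$, i.e.\ that deleting non-vertices does not shrink the hull. This is where finiteness of $P$ enters (so that $\mathrm{conv}(P)$ is a polytope with finitely many vertices), and it can either be cited as Minkowski's theorem for polytopes or proved directly: any $p \in P$ that is not a vertex of $\mathrm{conv}(P)$ lies in $\mathrm{conv}(P \setminus \{p\})$, so it may be removed without changing the hull, and iterating this reduction terminates precisely at $P_{ch}$. Note also that this lemma, combined with the preceding one (Step~1), immediately yields the theorem, since the SES of $P_{ch}$ encloses $P$ and no sphere enclosing $P$ can be smaller than the SES of the subset $P_{ch} \subseteq P$.
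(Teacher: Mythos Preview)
Your proposal is correct and follows essentially the same route as the paper: both argue that a ball containing $P_{ch}$ must, by convexity, contain $\mathrm{conv}(P_{ch}) = \mathrm{conv}(P) \supseteq P$. Your version simply makes explicit (via the triangle-inequality computation and the justification of $\mathrm{conv}(P_{ch}) = \mathrm{conv}(P)$) what the paper's two-line proof asserts without elaboration.
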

\begin{proof}
By definition, \(P_{ch}\) forms the boundary of \(\text{conv}(P)\). Since \(P \subseteq \text{conv}(P)\), any sphere that encapsulates \(P_{ch}\) must also encapsulate all points in \(P\). Therefore, the SES of \(P_{ch}\) is a valid encapsulating sphere for \(P\).
\end{proof}

\textbf{Step 3: Minimality of the SES}
\begin{lemma}
The SES of \(P_{ch}\) is the unique smallest encapsulating sphere for \(P\).
\end{lemma}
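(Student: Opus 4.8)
The plan is to derive the statement from the two preceding lemmas together with the classical fact that the smallest enclosing sphere of a nonempty compact set is unique. First I would record that uniqueness fact with its short proof: suppose $\mathcal{S}(c_1,r)$ and $\mathcal{S}(c_2,r)$ were two distinct enclosing spheres of minimal radius $r$ for a compact set $S$, with $c_1 \neq c_2$. For every $x \in S$ we have $\|x-c_1\|\le r$ and $\|x-c_2\|\le r$, and for the midpoint $c=\tfrac12(c_1+c_2)$ the parallelogram identity gives
\[
\|x-c\|^2 = \tfrac12\|x-c_1\|^2 + \tfrac12\|x-c_2\|^2 - \tfrac14\|c_1-c_2\|^2 \le r^2 - \tfrac14\|c_1-c_2\|^2 < r^2 ,
\]
so $S$ would fit inside a strictly smaller ball, contradicting minimality. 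Hence the SES is unique; in particular ``the SES of $P_{ch}$'' is well defined. I would also note, for the same reason of compactness plus continuity of the radius function, that an SES exists at all, so the object under discussion is legitimate.

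Next I would compare optimal radii. Write $r(P)$ and $r(P_{ch})$ for the radii of the SES of $P$ and of $P_{ch}$. Since $P_{ch}\subseteq P$, every sphere enclosing $P$ also encloses $P_{ch}$, so the feasible family for $P$ is contained in that for $P_{ch}$ and therefore $r(P)\ge r(P_{ch})$. Conversely, the Step 2 lemma shows the SES of $P_{ch}$ already encloses all of $P$, hence it is a feasible enclosing sphere of $P$ and $r(P)\le r(P_{ch})$. Combining the two inequalities yields $r(P)=r(P_{ch})$.

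Finally I would close the argument with uniqueness: the SES of $P_{ch}$ is an enclosing sphere of $P$ attaining the minimal possible radius $r(P)$, so it is \emph{a} smallest enclosing sphere of $P$, and by the uniqueness established above it is \emph{the} SES of $P$. The Step 1 lemma is not logically required for this chain but reinforces the conclusion by showing interior points of $\mathrm{conv}(P)$ cannot influence the SES.

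I expect the main obstacle to be the uniqueness step — specifically, making rigorous that distinct centers force the existence of a strictly smaller ball, which is exactly where strict convexity of the Euclidean ball (here encoded in the parallelogram identity) is used. Everything else is a one-line monotonicity-of-infima argument plus an appeal to the Step 2 lemma, so the write-up should be short once uniqueness (and existence) are stated cleanly; a minor point to handle carefully is simply ensuring the compactness/continuity remark that guarantees the SES of $P_{ch}$ exists in the first place.
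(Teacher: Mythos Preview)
Your argument is correct and follows the same logical skeleton as the paper: use $P_{ch}\subseteq P$ for one inequality on radii and the Step~2 lemma for the other, then conclude equality of the two SES. The paper phrases this as a two-line proof by contradiction (a hypothetical smaller sphere for $P$ would also enclose $P_{ch}$, contradicting minimality of $\mathcal{S}(P_{ch})$), and simply \emph{assumes} uniqueness of the SES from its earlier definition rather than proving it. Your version is more careful in two respects: you supply the parallelogram-identity proof of uniqueness, and you make the two-sided radius comparison explicit instead of leaving one direction implicit. So the route is not genuinely different, just more self-contained; the payoff is that your write-up does not rely on an unproved uniqueness assertion, at the cost of a few extra lines.
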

\begin{proof}
Assume, for contradiction, that there exists a smaller encapsulating sphere \(\mathcal{S}'\) for \(P\) than the SES of \(P_{ch}\). Since \(\mathcal{S}(P_{ch})\) is minimal and \(P_{ch} \subseteq P\), \(\mathcal{S}'\) would contradict the uniqueness of the SES of \(P_{ch}\). Therefore, the SES of \(P_{ch}\) is the smallest sphere that encapsulates \(P\).
\end{proof}

By combining the results of the above lemmas:
\begin{itemize}
    \item Lemma 1 shows that only the boundary points of \(\text{conv}(P)\) affect the SES.
    \item Lemma 2 shows that the SES of \(P_{ch}\) encapsulates all points in \(P\).
    \item Lemma 3 proves that the SES of \(P_{ch}\) is the unique minimal solution for \(P\).
\end{itemize}
Thus, the SES of \(P\) is equivalent to the SES of \(P_{ch}\). \qed

\subsection{Proof of \(P_s \subseteq P_{ch}\)}

\begin{proof}
Consider the set of points \(P\subset \mathbb{R}^3\), its convex hull \(P_{ch}\), and a projection \(\pi: \mathbb{R}^3 \to \mathbb{R}^2\) onto a plane.

1. \textit{Extreme Points in Projection}:
   For each projection \(\pi_k\) onto a plane \(\Pi_k\) (where \(k = 1, 2, \ldots, K\)), the convex hull of the projected points \(\pi_k(P)\) is \(\pi_k(P_{ch})\) by Lemma 1. The extreme points of \(\pi_k(P)\) are therefore a subset of \(\pi_k(P_{ch})\).

2. \textit{Pre-image of Extreme Points}:
   Each extreme point in \(\pi_k(P)\) corresponds to at least one point in \(P\) that projects onto it. Let \(p_{ext} \in P\) be such a point. Since \(p_{ext}\) contributes to an extreme point of \(\pi_k(P_{ch})\), \(p_{ext} \in P_{ch}\), as the convex hull is preserved under projection.

3. \textit{Construction of \(P_s\)}:
   For each \(\Pi_k\), four extreme points relative to the coordinate system on \(\Pi_k\) are chosen. Let \(P_{ext,k}\) denote these four points in \(P\) corresponding to the extreme points in \(\pi_k(P)\). Thus, \(P_s = \bigcup_{k=1}^K P_{ext,k}\).

4. \textit{Subset Relation}:
   Since every \(p \in P_s\) corresponds to an extreme point of \(\pi_k(P)\) for some \(k\), and since each such extreme point maps back to a point in \(P_{ch}\), it follows that \(P_s \subseteq P_{ch}\).

\end{proof}

We have rigorously shown that \(P_s \subseteq P_{ch}\). This result follows from the properties of convex hulls, projections, and the definition of extreme points. The proof leverages the fact that projections preserve convexity and that extreme points of a projection correspond to points on the convex hull in the original space.

\section{Degeneracy Avoidance in the Reduced Problem Using \(P_{ch}\)}

The reduced problem, which involves solving the Smallest Enclosing Sphere (SES) for the convex hull of the point set, \(P_{ch}\), inherently avoids degeneracies arising from internal points of the original point set \(P\). Below, we formally show why this holds.

\subsection{Definition of the Reduced Problem}
Given a set of points \(P = \{p_1, p_2, \dots, p_N\} \subset \mathbb{R}^3\), let \(P_{ch}\) denote the set of vertices of the convex hull of \(P\). The SES problem is reduced to finding the SES for \(P_{ch}\), i.e., solving for:
\[
S(P_{ch}) = \arg \min_{S} \left\{ \text{radius}(S) \mid P_{ch} \subseteq S \right\},
\]
where \(S(P_{ch})\) is the unique minimum enclosing sphere of \(P_{ch}\).

\subsection{Degeneracies in \(P\)}
Degeneracies in \(P\) can arise due to:
\begin{itemize}
    \item \textbf{Internal Points:} Points in \(P\) that do not lie on the convex hull contribute no additional constraints to the SES. Their inclusion can introduce numerical issues or ill-posed configurations, such as co-spherical subsets of internal points.
    \item \textbf{Ill-Conditioned Subsets:} Degenerate configurations such as co-planar or co-linear points within \(P\) can cause instability in algorithms that do not focus on the convex hull.
\end{itemize}

\subsection{Convex Hull Simplification}
The convex hull \(P_{ch}\) inherently filters out all internal points of \(P\). By definition:
\[
P_{ch} = \{p \in P \mid p \text{ lies on the boundary of } \text{conv}(P)\},
\]
where \(\text{conv}(P)\) represents the convex hull of \(P\).

\subsection{Degeneracy Avoidance}
The properties of \(P_{ch}\) ensure that degeneracies arising from internal points of \(P\) are avoided:
\begin{enumerate}
    \item \textbf{Internal Points Exclusion:} Internal points of \(P\) are not part of \(P_{ch}\) and thus do not contribute to the computation of \(S(P_{ch})\). As a result, potential co-spherical configurations involving internal points are eliminated.
    \item \textbf{Boundary Points Define SES:} The SES is uniquely determined by at most four points in \(P_{ch}\) (in 3D space). These points are necessarily boundary points of \(P_{ch}\), avoiding ill-conditioned configurations that could arise from internal points.
    \item \textbf{Numerical Stability:} By focusing solely on \(P_{ch}\), algorithms avoid unnecessary computations involving redundant internal points, improving numerical stability and reducing susceptibility to floating-point errors.
\end{enumerate}

By reducing the problem to \(P_{ch}\), the computation inherently avoids degeneracies caused by internal points of \(P\). The exclusion of internal points ensures that the SES is determined solely by boundary points, leading to a robust and well-posed problem. This reduction simplifies the SES computation while maintaining the geometric integrity of the solution.

\section{Convergence of Subset $P_s$ to Convex Hull $P_{ch}$ in 3D Space}

\subsection{Converges to the Convex Hull $P_{ch}$ of $P$ as $K \to \infty$}

Here, we define and prove that for a confined set of points $P$ in 3D space, the subset $P_s$ of external points 
extracted by projecting $P$ onto $K$ planes with varying orientations converges to the convex hull $P_{ch}$ of $P$ as $K \to \infty$. 
The proof employs rigorous mathematical techniques, leveraging properties of convex hulls, projections, and geometric limits, to 
establish the equivalence of $P_s$ and $P_{ch}$ in the infinite limit.

\begin{itemize}
    \item Definition of the Smallest Enclosing Sphere (SES) problem and its dependence on the convex hull $P_{ch}$.
    \item Motivation for approximating $P_{ch}$ via external points $P_s$ obtained through projections.
    \item Outline of the paper structure: definition, mathematical formulation, and proof.
\end{itemize}

\subsubsection{Problem Definition}

Let $P = \{p_1, p_2, \dots, p_N\}$ be a finite set of points in $\mathbb{R}^3$. Define:
\begin{itemize}
    \item $P_{ch}$: the set of vertices forming the convex hull of $P$.
    \item $P_s$: a subset of $P$ formed by projecting $P$ onto $K$ planes with orientations $\{\mathbf{n}_1, \mathbf{n}_2, \dots, \mathbf{n}_K\}$.
    \item For each plane, the four extreme points in the 2D projection are added to $P_s$.
\end{itemize}
We aim to prove that as $K \to \infty$, $P_s \to P_{ch}$.

\subsubsection{Preliminary Concepts and Lemmas}

\paragraph{Convex Hull Definition}
The convex hull $P_{ch}$ is the smallest convex set containing $P$. Mathematically:
\[
P_{ch} = \bigcap_{H \supseteq P, \text{$H$ convex}} H.
\]

\subsubsection{Projections and Extreme Points}
Projecting $P$ onto a plane with normal $\mathbf{n}$ involves computing:
\[
\pi_{\mathbf{n}}(p) = p - (p \cdot \mathbf{n}) \mathbf{n}.
\]
Extreme points in the projection are those that maximize or minimize coordinates in a chosen 2D coordinate system on the plane.

\paragraph{Coverage of Orientations}
Define a uniform coverage of orientations as ensuring that for any direction $\mathbf{d}$, there exists a plane with normal $\mathbf{n}$ such that the projection captures the extreme points in the direction $\mathbf{d}$.

\subsubsection{Proof of Convergence}

\paragraph{Sufficiency of $P_{ch}$ for Extreme Points}
\begin{lemma}
All extreme points in any direction $\mathbf{d}$ of $P$ are vertices of $P_{ch}$.
\end{lemma}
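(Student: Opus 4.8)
The plan is to reduce the notion of ``extreme point in direction $\mathbf{d}$'' to the maximization of a linear functional and then invoke the supporting-hyperplane characterization of polytope vertices. Concretely, call $p \in P$ extreme in direction $\mathbf{d}$ if $\langle p, \mathbf{d}\rangle = h_{\mathbf{d}} := \max_{q \in P}\langle q, \mathbf{d}\rangle$. Since $P$ is finite this maximum is attained, and the closed halfspace $\{x : \langle x, \mathbf{d}\rangle \le h_{\mathbf{d}}\}$ contains $\mathrm{conv}(P) = \mathrm{conv}(P_{ch})$; hence $H_{\mathbf{d}} = \{x : \langle x, \mathbf{d}\rangle = h_{\mathbf{d}}\}$ is a supporting hyperplane of $\mathrm{conv}(P)$ and $F_{\mathbf{d}} := P \cap H_{\mathbf{d}}$ is the vertex set of the exposed face $\mathrm{conv}(P) \cap H_{\mathbf{d}}$. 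This is the structural fact I would set up first.

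Next I would run the standard convex-combination argument. Suppose, for contradiction, that a point $p$ that is extreme in direction $\mathbf{d}$ is \emph{not} a vertex of $P_{ch}$. Then, by Carath\'eodory together with the fact that a polytope is the convex hull of its vertices, $p = \sum_i \lambda_i q_i$ with $q_i \in P_{ch}\setminus\{p\}$, $\lambda_i > 0$, $\sum_i \lambda_i = 1$. Pairing with $\mathbf{d}$ gives
\[
h_{\mathbf{d}} = \langle p, \mathbf{d}\rangle = \sum_i \lambda_i \langle q_i, \mathbf{d}\rangle \le \sum_i \lambda_i h_{\mathbf{d}} = h_{\mathbf{d}},
\]
and equality forces $\langle q_i, \mathbf{d}\rangle = h_{\mathbf{d}}$ for every $i$ with $\lambda_i>0$, i.e.\ all such $q_i$ lie in $H_{\mathbf{d}}$. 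Thus $p$ is a convex combination of \emph{other} points of $F_{\mathbf{d}}$, so $p$ fails to be a vertex of the face $\mathrm{conv}(P)\cap H_{\mathbf{d}}$ as well. Since a vertex of a polytope is exactly a $0$-dimensional face and faces of faces are faces, iterating this step inside the affine span of $H_{\mathbf{d}}$ strictly lowers the dimension each time and must terminate at a $0$-dimensional face equal to $\{p\}$, contradicting the representation. Hence $p$ is a vertex of $P_{ch}$.

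I expect the genuine obstacle to be the degenerate case in which $\mathbf{d}$ is maximized on a positive-dimensional face of $\mathrm{conv}(P)$ --- precisely the co-planar/co-linear situations this paper is built around. There an arbitrarily chosen maximizer of $\langle\,\cdot\,,\mathbf{d}\rangle$ need not be a vertex, so the lemma as phrased is literally correct only for directions whose maximizer is unique. I would close this gap by observing (i) that such directions form an open, dense, full-measure subset of $S^2$, which is exactly what the subsequent $K\to\infty$ density argument needs, and (ii) that the four points actually selected per projection plane are lexicographic extrema (extreme in one plane-coordinate, ties broken by the other), and a lexicographic maximizer of a finite point set is always a vertex of its convex hull. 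Making the tie-breaking precise --- and verifying that lexicographic extremality within the 2D plane lifts to a genuine supporting hyperplane in $\mathbb{R}^3$ via the composition of the projection with the lexicographic functional --- is the only part that requires real care; everything else is the routine computation above.
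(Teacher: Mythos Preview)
The paper's own proof is a single sentence: it asserts that the point maximizing $p\cdot\mathbf{d}$ lies on $P_{ch}$ ``by definition of convex hull as the set of extreme points'' and stops there. Your supporting-hyperplane and convex-combination setup is the standard unpacking of that one-line appeal, so at the level of ideas you are doing the same thing, just with the machinery made explicit.

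One wrinkle: your dimension-lowering ``iteration'' does not actually produce a contradiction. Once you have written $p=\sum_i\lambda_i q_i$ with the $q_i\in P_{ch}\cap H_{\mathbf{d}}$, there is nothing further to iterate---$p$ may simply sit in the relative interior of the face $\mathrm{conv}(P)\cap H_{\mathbf{d}}$, and ``faces of faces are faces'' gives you no smaller face containing $p$ to descend to. You effectively acknowledge this in the next paragraph when you say the lemma ``is literally correct only for directions whose maximizer is unique''; that is the accurate statement, and in that case the contradiction is immediate (the face is $\{p\}$, so no nontrivial convex combination exists) without any iteration.

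Where your proposal genuinely adds value over the paper is in flagging the degenerate case (maximizer non-unique on a positive-dimensional face) and proposing the two natural repairs: restrict to generic directions, or use lexicographic extrema in the plane coordinates so that the selected point is always a vertex. The paper's one-line proof glosses over this entirely, even though such degeneracies are precisely the configurations the paper is about. Your observation that lexicographic tie-breaking in the projected plane lifts to a genuine vertex of $P_{ch}$ is the right way to make the downstream $P_s\subseteq P_{ch}$ claim airtight; the paper never makes this explicit.
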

\begin{proof}
Let $\mathbf{d}$ be a direction vector. The point $p \in P$ maximizing $p \cdot \mathbf{d}$ lies on the convex hull $P_{ch}$ by definition of convex hull as the set of extreme points.
\end{proof}

\paragraph{Convergence of $P_s$ to $P_{ch}$}
\begin{theorem}
As $K \to \infty$ and the orientations $\{\mathbf{n}_i\}$ cover all possible directions in $\mathbb{R}^3$, $P_s \to P_{ch}$.
\end{theorem}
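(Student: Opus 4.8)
The plan is to show two inclusions. One direction, $P_s \subseteq P_{ch}$, is already established in the earlier section, so what remains is to prove that every vertex of $P_{ch}$ eventually appears in $P_s$ as $K \to \infty$ under the assumed uniform coverage of orientations. First I would fix an arbitrary vertex $v \in P_{ch}$. Since $v$ is a vertex of the convex hull, there exists a supporting hyperplane at $v$; equivalently, there is a direction $\mathbf{d} \in S^2$ such that $v$ is the \emph{unique} maximizer of $p \cdot \mathbf{d}$ over $p \in P$. (Uniqueness is the crucial point and uses that $v$ is an extreme point, not merely a boundary point; if several points tie, one perturbs $\mathbf{d}$ slightly within the normal cone of $v$, which is full-dimensional at a vertex, to break the tie.) Let $\delta > 0$ be the resulting gap, $\delta = v \cdot \mathbf{d} - \max_{p \neq v} p \cdot \mathbf{d} > 0$.

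Next I would connect this extremal direction to the projection/extreme-point extraction procedure. When we project onto a plane $\Pi_k$ with normal $\mathbf{n}_k$ and then take the four extreme points in the plane's local coordinate frame $(\mathbf{u}_k, \mathbf{w}_k)$, we are selecting the maximizers and minimizers of $p \cdot \mathbf{u}_k$ and $p \cdot \mathbf{w}_k$. So $P_s$ contains, for each $k$, the maximizer of $p \cdot \mathbf{e}$ for $\mathbf{e} \in \{\pm\mathbf{u}_k, \pm\mathbf{w}_k\}$. The key observation is that $\mathbf{u}_k, \mathbf{w}_k$ are themselves unit vectors orthogonal to $\mathbf{n}_k$, so as $k$ ranges over the $K$ plane orientations the set of available ``test directions'' $\{\pm\mathbf{u}_k, \pm\mathbf{w}_k\}$ forms a sample of $S^2$ that, by the uniform-coverage hypothesis, becomes dense in $S^2$ as $K \to \infty$. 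Hence for $K$ large enough there is some $k$ with a test direction $\mathbf{e}$ satisfying $\|\mathbf{e} - \mathbf{d}\| < \varepsilon$, where $\varepsilon$ is chosen small relative to $\delta$ and the diameter $D = \max_{p,q \in P}\|p - q\|$ — concretely $\varepsilon < \delta / (2D)$ suffices, since then for every $p \neq v$,
\[
v \cdot \mathbf{e} - p \cdot \mathbf{e} = (v - p)\cdot\mathbf{d} + (v-p)\cdot(\mathbf{e}-\mathbf{d}) \geq \delta - D\varepsilon > \delta/2 > 0,
\]
so $v$ is still the strict maximizer in direction $\mathbf{e}$ and is therefore picked into $P_s$.

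Finally I would assemble these pieces: since $P_{ch}$ is finite, I can take the minimum of the finitely many gaps $\delta_v$ over all vertices $v$, obtain a single threshold $\varepsilon^* > 0$, and invoke uniform coverage to get a $K^*$ beyond which every vertex's extremal direction is $\varepsilon^*$-approximated by some test direction; for $K \geq K^*$ we then have $P_{ch} \subseteq P_s$, and combined with $P_s \subseteq P_{ch}$ this gives $P_s = P_{ch}$, hence trivially $P_s \to P_{ch}$. (If one prefers a genuine limiting statement rather than eventual equality, note that $P_s$ is monotone nondecreasing once directions accumulate, or simply phrase the conclusion as: for every $\varepsilon > 0$ there is $K$ with $P_{ch} \subseteq P_s \subseteq P_{ch}$, which is exact coincidence.)

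The main obstacle is making the ``uniform coverage of orientations'' hypothesis precise enough to do real work. As stated in the excerpt it is somewhat informal, so I would pin it down as: the set $\bigcup_k \{\pm\mathbf{u}_k, \pm\mathbf{w}_k\}$ of in-plane test directions becomes $\varepsilon$-dense in $S^2$ for every $\varepsilon > 0$ as $K \to \infty$. One must also be careful that the local frame vectors $\mathbf{u}_k, \mathbf{w}_k$ — not just the normals $\mathbf{n}_k$ — are the directions that matter, and that their union over $k$ densely covers the sphere; a symmetric, well-spread choice of normals together with any consistent rule for the in-plane axes achieves this, but the argument does depend on that choice, and this dependence should be acknowledged rather than glossed over. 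The secondary subtlety is the tie-breaking/uniqueness step: one needs the normal cone at a vertex to be full-dimensional (true for genuine vertices of a 3D convex hull) so that a strictly maximizing direction exists in its interior.
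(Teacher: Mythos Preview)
Your proof follows the same overall strategy as the paper's---establish $P_s \subseteq P_{ch}$ and then argue that every vertex of $P_{ch}$ is eventually captured once projection directions become dense---but you carry it out with substantially more care than the paper's four-line sketch: you correctly isolate that the operative test directions are the in-plane axes $\pm\mathbf{u}_k,\pm\mathbf{w}_k$ rather than the normals $\mathbf{n}_k$, you supply the quantitative gap estimate $\varepsilon < \delta/(2D)$ needed to transfer strict extremality from $\mathbf{d}$ to a nearby $\mathbf{e}$, and you handle tie-breaking via the full-dimensional normal cone at a vertex. The paper asserts these steps implicitly; your version makes them explicit and is a genuine tightening of the argument.
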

\begin{proof}
\begin{itemize}
    \item For each direction $\mathbf{d}$, there exists a plane with normal $\mathbf{n}$ capturing the extreme points of $P$ in that direction.
    \item As $K \to \infty$, the union of extreme points from all projections includes all vertices of $P_{ch}$.
    \item Since $P_s \subseteq P$, no extraneous points outside $P$ are added.
    \item Therefore, $P_s$ converges to $P_{ch}$ as $K \to \infty$.
\end{itemize}
\end{proof}

\paragraph{Analysis of Internal Points}
\begin{lemma}
An internal point $P_{\text{new}}$ inside the convex hull $P_{ch}$ cannot belong to $P_s$ for any projection $K_i$.
\end{lemma}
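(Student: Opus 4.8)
The plan is to reduce the claim to a single observation about linear maps: a strictly interior point of a convex body projects to a strictly interior point of the projected body, and an interior point of a planar convex region is never one of its axis-extreme points. First I would fix the hypothesis precisely. Saying that $P_{\text{new}}$ is an \emph{internal} point of $P_{ch}$ means that $P_{\text{new}}$ lies in the (relative) interior of $\text{conv}(P)$; equivalently, by the definition of extreme points it is not a vertex of $P_{ch}$, and there is an $\varepsilon>0$ with $B(P_{\text{new}},\varepsilon)\cap\text{aff}(P)\subseteq\text{conv}(P)$.

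Next, fix an arbitrary projection $\pi_k\colon\mathbb{R}^3\to\Pi_k$ from the construction of $P_s$. By the Projection and Convexity Lemma, $\pi_k(\text{conv}(P))=\text{conv}(\pi_k(P))$. The key step is that $\pi_k$, being a surjective linear map onto the two-dimensional plane $\Pi_k$, is an open map (it factors as $\mathbb{R}^3\to\mathbb{R}^3/\ker\pi_k\cong\mathbb{R}^2$), and therefore carries the relative interior of $\text{conv}(P)$ into the relative interior of $\text{conv}(\pi_k(P))$. Hence $\pi_k(P_{\text{new}})$ is an interior point of the planar convex hull $\text{conv}(\pi_k(P))$.

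I would then appeal to the elementary fact that an interior point of a planar convex body attains neither the maximum nor the minimum of any non-zero linear functional over that body — those extrema occur only on the boundary. Applied to the two coordinate axes attached to $\Pi_k$, this shows that $\pi_k(P_{\text{new}})$ is not among the four extreme points output by $f(P,\Pi_k)$, so $P_{\text{new}}\notin P_{ext,k}$. Since $k$ was arbitrary and $P_s=\bigcup_{k=1}^{K}P_{ext,k}$, we conclude $P_{\text{new}}\notin P_s$.

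The step I expect to be the main obstacle is excluding a \emph{projection collision}: it is conceivable that $\pi_k(P_{\text{new}})=\pi_k(q)$ for a genuine hull vertex $q$, so that taking a pre-image of an extreme point of $\pi_k(P)$ could accidentally return $P_{\text{new}}$ instead of $q$. The openness argument settles this, since a point in the relative interior of $\text{conv}(\pi_k(P))$ cannot equal any extreme point of that set; I would make this explicit rather than leave it tacit. A secondary subtlety is the lower-dimensional case — $P$ coplanar, or a projection direction lying in $\text{aff}(P)$ — where ``interior'' must be read relative to the affine hull and the image may collapse to a segment; the same relative-interior reasoning still applies, and I would invoke the paper's regularity and symmetric-coverage assumptions on $\{\Pi_k\}$ to keep the projections non-degenerate.
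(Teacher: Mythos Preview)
Your argument is correct and follows essentially the same line as the paper's own proof: an interior point of $\text{conv}(P)$ projects to an interior point of the planar hull $\text{conv}(\pi_k(P))$, and hence cannot be one of the axis-extreme points selected for $P_s$. Your version is in fact more careful than the paper's, which simply asserts the interior-to-interior step without invoking openness and does not address the projection-collision subtlety you make explicit.
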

\begin{proof}
Consider $P_{\text{new}} \in P$, such that $P_{\text{new}}$ lies strictly inside the convex hull $P_{ch}$. Let $\pi_{K_i}$ denote the projection onto plane $K_i$ with orientation $\mathbf{n}_i$. For the projection $\pi_{K_i}$:
\begin{itemize}
    \item The 2D projection $\pi_{K_i}(P_{\text{new}})$ lies strictly within the convex polygon $\pi_{K_i}(P_{ch})$.
    \item The extreme points in $\pi_{K_i}(P)$ are defined by vertices of the 2D convex hull $\pi_{K_i}(P_{ch})$. Since $P_{\text{new}}$ is not a vertex of $P_{ch}$, it cannot be a vertex of $\pi_{K_i}(P_{ch})$.
    \item Thus, $P_{\text{new}}$ does not contribute to the set of extreme points for any projection $K_i$.
\end{itemize}
Therefore, $P_{\text{new}}$ cannot be included in $P_s$.
\end{proof}

\begin{itemize}
    \item Implications of the result for SES computation.
    \item Practical considerations for finite $K$ and trade-offs between accuracy and computation.
    \item Extensions to higher dimensions and other geometric problems.
\end{itemize}

We have rigorously proved that as $K \to \infty$, the subset $P_s$ obtained from projections converges to the convex hull $P_{ch}$. Furthermore, internal points such as $P_{\text{new}}$ that lie within $P_{ch}$ are never included in $P_s$. This result provides a theoretical foundation for approximation-based approaches to geometric problems involving convex hulls.

\subsection{Practical Bounds for $K_{min}$ and $K_{max}$}

Given a finite set of points \(P = \{p_1, p_2, \dots, p_N\} \subset \mathbb{R}^3\) and a set of \(K\) projection planes \(\{\Pi^i\}_{i=1}^K\), symmetrically distributed around \(P\), where \(P_s^i\) denotes the subset of extreme points obtained by projecting \(P\) onto \(\Pi^i\), we aim to determine:
\begin{enumerate}
    \item The minimal number of projections \(K_{\text{min}}\) required such that the union \(P_s = \bigcup_{i=1}^K P_s^i\) equals the set of vertices of the convex hull \(P_{ch}\).
    \item The maximal number of projections \(K_{\text{max}}\) required to guarantee \(P_s = P_{ch}\) under the assumptions of regularity in \(P\) and symmetric distribution of \(\{\Pi^i\}\).
\end{enumerate}

\textbf{Problem Assumptions:}
\begin{itemize}
    \item \textit{Regularity:} The points in \(P\) are distributed in a non-pathological, regular manner, avoiding extreme clustering or sparsity.
    \item \textit{Symmetry:} The projection planes \(\{\Pi^i\}\) are symmetrically distributed to ensure even directional coverage around \(P\).
\end{itemize}

\textbf{Analysis:}
\begin{enumerate}
    \item \textit{Minimal Number of Projections (\(K_{\text{min}}\)):}
    \begin{itemize}
        \item For \textbf{symmetric distributions}, where \(P\) exhibits high geometric symmetry (e.g., uniform distributions in spherical or cubical volumes), the convex hull vertices align with principal axes or diagonals. Projections onto the three principal planes (\(xy\), \(xz\), \(yz\)) and three diagonal planes (\(x=y\), \(y=z\), \(z=x\)) suffice to capture all vertices of \(P_{ch}\). Thus, for symmetric distributions:
        \[
        K_{\text{min}} = 6.
        \]
        \item For \textbf{general regular distributions}, the number of convex hull vertices \(V\) grows sublinearly with \(N\), typically as \(V \propto O(\sqrt{N})\). Sufficiently capturing all vertices requires sampling directions on the unit sphere, which scales with \(O(\sqrt{V})\). Therefore, for general regular distributions:
        \[
        K_{\text{min}} = c_1 \cdot N^{1/4},
        \]
        where \(c_1\) is a constant dependent on the specific distribution of \(P\).
    \end{itemize}

    \item \textit{Maximal Number of Projections (\(K_{\text{max}}\)):}
    \begin{itemize}
        \item In the worst-case scenario of regular (but irregularly shaped) distributions, the number of convex hull vertices \(V\) scales as \(O(\sqrt{N})\). Ensuring robust coverage requires sampling directions proportional to \(V\), giving:
        \[
        K_{\text{max}} = c_2 \cdot \sqrt{N},
        \]
        where \(c_2\) depends on the precision of the directional sampling scheme (e.g., uniform or geodesic sampling).
    \end{itemize}
\end{enumerate}

\textbf{Summary of Results:}
\[
K_{\text{min}} = 
\begin{cases}
6, & \text{for symmetric distributions,} \\
c_1 \cdot N^{1/4}, & \text{for general regular distributions.}
\end{cases}
\]
\[
K_{\text{max}} = c_2 \cdot \sqrt{N}.
\]

These bounds provide theoretical guarantees for reconstructing the convex hull \(P_{ch}\) using projection-based methods, ensuring completeness with minimal computational effort. Symmetric distributions benefit from reduced projection counts (\(K_{\text{min}} = 6\)), while general regular distributions require projections scaling with the sublinear growth of \(P_{ch}\) vertices.

\section{Complexity Analysis of the Process}

We analyze the computational complexity of each step of the described process:

\subsection{Stages of the Process}

\paragraph{Step 1: Projecting the Cloud on $K$ Planes}

Given a set of $N$ points $P$ in 3D space, we project the point cloud onto $K$ planes. For each projection:
\begin{itemize}
    \item Each point needs to be transformed to the coordinate system of the respective plane. Assuming the transformation involves a constant number of operations (e.g., rotation and translation), the cost of processing $N$ points is $O(N)$ per plane.
\end{itemize}

The total complexity for projecting onto $K$ planes is:
\[
O(K \cdot N).
\]

\paragraph{Step 2: Identifying the Four Extreme Points per Projection}

For each of the $K$ projections, we need to identify four extreme points relative to an arbitrary 2D coordinate system (e.g., the farthest in each direction or the convex hull extremes). The extreme points can be found in $O(N)$ time per projection by scanning all $N$ points.

Thus, the complexity for this step across $K$ projections is:
\[
O(K \cdot N).
\]

\paragraph{Step 3: Constructing the Subset $P_s$}

The subset $P_s$ is constructed by collecting $4 \times K$ points (four extreme points per projection). Assuming that adding a point to $P_s$ takes constant time, the complexity of this step is:
\[
O(K).
\]

\subsection{Overall Complexity}

Adding the complexities of all steps, the overall complexity of the process is:
\[
O(K \cdot N + K \cdot N + K) = O(K \cdot N).
\]

\begin{itemize}
    \item The dominant term in the complexity is $O(K \cdot N)$, which arises from projecting the point cloud and identifying the 
	extreme points.
    \item Since $K \ll N$, the computational cost is heavily influenced by $N$, the number of points in the original set $P$
	hence $O(N)$.
\end{itemize}

\section{Conclusions}

In this paper, we presented an algorithm for solving the Smallest Enclosing Sphere (SES) problem with \(O(N)\) complexity, designed to 
address challenges arising from degenerate configurations. 
The approach is based on a geometric projection framework that extracts a reduced subset of points, \(P_s\), from the original 
dataset \(P\). Through theoretical analysis, we established that \(P_s\) converges to the convex hull \(P_{ch}\) of \(P\) 
as the number of projections increases, thereby ensuring robustness against degeneracies.

Theoretical results demonstrate that solving the SES problem for the convex hull \(P_{ch}\) is equivalent to solving it for the 
full dataset \(P\), with the convex hull effectively filtering out potential ill-posed subsets and internal points, which do not influence 
the SES but can significantly impact the numerical stability of conventional SES algorithms, so to achieve robustness against 
numerical instabilities without sacrificing computational efficiency. \\

The validity of the method is supported by the following logical reasoning:

\begin{itemize}
    \item \textbf{Equivalence of SES for \(P\) and \(P_{ch}\):} Through formal proofs, it was shown that the SES of the full 
	set \(P\) is equivalent to the SES of its convex hull \(P_{ch}\), as the SES depends only on the extreme points that 
	form the boundary of \(P_{ch}\).
    
    \item \textbf{Degeneracy avoidance:} By focusing on \(P_{ch}\), the algorithm inherently avoids complications caused by 
	internal points, such as co-spherical or ill-conditioned configurations, ensuring that degeneracies do not affect the computation.
    
    \item \textbf{Subset \(P_s\) convergence to \(P_{ch}\):} It was demonstrated that as the number of projections \(K\) 
	increases, the subset \(P_s\), constructed from extreme points in each projection, converges to \(P_{ch}\). This was 
	established by showing that every extreme point of \(P\) is a vertex of \(P_{ch}\) and is captured by sufficiently many 
	directional projections.
    
    \item \textbf{Complexity and robustness:} The algorithm achieves \(O(N)\) complexity by ensuring that each projection and 
	subsequent extreme point identification operate in linear time with respect to the number of points. This efficiency holds 
	even as the algorithm mitigates issues arising from degeneracies.
\end{itemize}

The computational complexity analysis indicates that the method maintains a linear complexity \(O(N)\) for practical choices 
of the number of projections \(K\), making it well-suited for large datasets. The approach avoids the higher computational 
costs associated with exact arithmetic or preprocessing steps like convex hull computation, while still addressing the 
key challenges posed by degeneracies.

\bibliographystyle{plain}

\end{document}